\title{Quasi-Polynomial Local Search for Restricted Max-Min Fair Allocation\thanks{This research was supported by ERC Advanced investigator grants 228021 and 226203.}}
\author{Lukas Polacek\inst{1} and Ola Svensson\inst{2}}
\institute{KTH Royal Institute of Technology, Sweden
  \email{polacek@csc.kth.se}
  \and EPFL, Switzerland
  \email{ola.svensson@epfl.ch}
}
\date{\today}
\newcommand{\res}{\ensuremath{\mathcal{R}}\xspace}
\newcommand{\players}{\ensuremath{\mathcal{P}}\xspace}
\newcommand{\conf}[1]{\ensuremath{\mathcal{C}(#1)}}
\newcommand{\hide}[1]{}
\begin{document}
\maketitle

\begin{abstract}
The restricted max-min fair allocation problem (also known as the
restricted Santa Claus problem) is one of few problems that enjoys the
intriguing status of having a better estimation algorithm than
approximation algorithm. Indeed, Asadpour et al.~\cite{AFS08} proved
that a certain configuration LP can be used to estimate the optimal
value within a factor ${1}/{(4+\epsilon)}$, for any $\epsilon>0$, but at
the same time it is not known how to efficiently find a solution with
a comparable performance guarantee.

A natural question that arises from their work is if the difference
between these guarantees is inherent or because of a lack of suitable
techniques. We address this problem by giving a quasi-polynomial
approximation algorithm with the mentioned performance guarantee. More
specifically, we modify the local search of~\cite{AFS08} and provide a
novel analysis that lets us significantly improve the bound on its
running time: from $2^{O(n)}$ to $n^{O(\log n)}$. Our techniques also
have the interesting property that although we use the rather complex
configuration LP in the analysis, we never actually solve it and
therefore the resulting algorithm is purely combinatorial.
\end{abstract}

\section{Introduction}


We consider the problem of indivisible resource allocation in the
following classical setting: a set \res of available resources shall
be allocated to a set \players of players where the value of a set of
resources for player $i$ is given by the function $f_i : 2^\res
\mapsto \mathbb{R}$.
This is a very general setting and dependent on the specific goals of
the allocator several different objective functions have been studied.

One natural objective, recently studied
in~\cite{DS06,Feige06,FV06,Vondrak08}, is to maximize the social
welfare, i.e., to find an allocation $\pi: \res \mapsto \players$ of
resources to players so as to maximize $\sum_{i\in \players} f_i(
\pi^{-1}(i))$.  However, this approach is not suitable in settings
where the property of ``fairness'' is desired.  Indeed, it is easy to
come up with examples where an allocation that maximizes the social
welfare assigns all resources to even a single player. In this
paper we address this issue by studying algorithms for finding
``fair'' allocations. More specifically, fairness is modeled by
evaluating an allocation with respect to the satisfaction of the least
happy player, i.e., we wish to find an allocation $\pi$ that maximizes
$\min_{i\in \players} f_i(\pi^{-1}(i))$. In contrast to maximizing the
social welfare, the problem of maximizing fairness is already
$\mathsf{NP}$-hard when players have linear value functions. In order
to simplify notation for such functions we denote $f_i({j})$ by
$v_{i,j}$ and hence we have that $f_i(\pi^{-1}(i)) = \sum_{j\in
  \pi^{-1}(i)} v_{i,j}$. This problem has recently received
considerable attention in the literature and is often referred to as
the \emph{max-min fair allocation} or the \emph{Santa Claus} problem.

One can observe that the max-min fair allocation problem is similar to
the classic problem of scheduling jobs on unrelated machines to
minimize the makespan, where we are given the same input but wish to
find an allocation that minimizes the maximum instead of one that
maximizes the minimum. In a classic paper~\cite{LST90}, Lenstra,
Shmoys \& Tardos gave a $2$-approximation algorithm for the scheduling
problem and proved that it is $\mathsf{NP}$-hard to approximate the
problem within a factor less than $1.5$. The key step of their
$2$-approximation algorithm is to show that a certain linear program,
often referred to as the assignment LP, yields an additive
approximation of $v_{\max} = \max_{i,j} v_{i,j}$.  Bez\'akov\'a and
Dani~\cite{BD05} later used these ideas for max-min fair allocation to
obtain an algorithm that always finds a solution of value at least
$OPT-v_{\max}$, where $OPT$ denotes the value of an optimal
solution. However, in contrast to the scheduling problem, this
algorithm and more generally the assignment LP gives no approximation
guarantee for max-min fair allocation in the challenging cases when
$v_{\max} \geq OPT$.

In order to overcome this obstacle, Bansal \& Sviridenko~\cite{BS06}
proposed a stronger linear program relaxation, known as the
configuration LP, for the max-min fair allocation problem. The
configuration LP that we describe in detail in Section~\ref{sec:CLP}
has been vital to the recent progress on better approximation
guarantees.  Asadpour \& Saberi~\cite{AS07} used it to obtain a
$\Omega(1/\sqrt{|\players|} (\log |\players|)^3)$-approximation
algorithm which was later improved by Bateni et al.~\cite{Bateni09}
and Chakrabarty et al.~\cite{Chakrabarty09} to algorithms that return
a solution of value at least $\Omega(OPT/|\players|^\epsilon)$ in time
$O(|\players|^{1/\epsilon})$.

The mentioned guarantee $\Omega(OPT/|\players|^\epsilon)$ is rather
surprising because the integrality gap of the configuration LP is no
better than $O(OPT/\sqrt{|\players|})$~\cite{BS06}. However, in
contrast to the general case, the configuration LP is significantly
stronger for the prominent special case where values are of the form
$v_{i,j} \in \{v_j, 0\}$. This case is known as the \emph{restricted}
max-min fair allocation or the restricted Santa Claus problem and is the focus of our
paper. The worst known integrality gap for the restricted case is
$1/2$ and it is known~\cite{BD05} that it is $\mathsf{NP}$-hard to beat
this factor (which is also the best known hardness result for the
general case).
Bansal \& Sviridenko~\cite{BS06} first used the configuration LP to
obtain an $O(\log\log \log |\players| / \log \log
|\players|)$-approximation algorithm for the restricted max-min fair
allocation problem. They also proved several structural properties
that were later used by Feige~\cite{Feige08} to prove that the
integrality gap of the configuration LP is in fact constant in the
restricted case. The proof is based on repeated use of Lov\'{a}sz
local lemma and was 
turned into a polynomial time algorithm~\cite{HSS10}.

The approximation guarantee obtained by combining~\cite{Feige08}
and~\cite{HSS10} is a large constant and is far away from the best
known analysis of the configuration LP by Asadpour et
al.~\cite{AFS08}.
More specifically, they proved in~\cite{AFS08} that the integrality
gap is lower bounded by $1/4$ by designing a beautiful local search
algorithm that eventually finds a solution with the mentioned
approximation guarantee, but is only known to converge in exponential
time.  As the configuration LP can be solved up to any precision in
polynomial time, this means that we can approximate the value of an
optimal solution within a factor $1/(4+\epsilon)$ for any $\epsilon
>0$ but it is not known how to efficiently find a solution with a
comparable performance guarantee. Few other problems enjoy this
intriguing status (see e.g. the overview article by
Feige~\cite{FeigeSurv08}). One of them is the restricted assignment
problem\footnote{Also here the restricted version of the  problem is the special case where $v_{ij} \in \{v_j, \infty\}$ ($\infty$ instead of $0$ since we are minimizing).},
  for which the second author in \cite{SME11} developed the techniques
from~\cite{AFS08} to show that the configuration LP can be used to
approximate the optimal makespan within a factor $33/17 + \epsilon$
improving upon the $2$-approximation by Lenstra, Shmoys \&
Tardos~\cite{LST90}. Again it is not known how to efficiently find a
schedule of the mentioned approximation guarantee. However, these
results indicate that an improved understanding of the configuration
LP is likely to lead to improved approximation algorithms for these
fundamental allocation problems.


In this paper we make progress that further substantiates this
point. We modify the local search of~\cite{AFS08} and present
a novel analysis that allows us to significantly improve the bound on the
running time from an exponential guarantee to a quasi-polynomial
guarantee.
\begin{theorem}
\label{thm:main}
  For any $\epsilon \in (0,1]$, we can find a
  $\frac{1}{4+\epsilon}$-approximate solution to  restricted
  max-min fair allocation in time $n^{O\left(\frac{1}{\epsilon}
    \log n\right)}$, where $n= |\players| + |\res|$.
\end{theorem}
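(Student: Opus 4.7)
The plan is to adapt the local search framework of Asadpour–Feige–Saberi and equip it with a tighter layered analysis. I would first use binary search to guess the optimal value $T$ of the configuration LP; since the goal is a purely combinatorial algorithm, the correctness of each guess is certified a posteriori by whether the local search terminates within the prescribed budget. Having fixed $T$, I would classify a resource $j$ as \emph{fat} for player $i$ if $v_{i,j}\geq T/(4+\epsilon)$ and \emph{thin} otherwise. The target of the search is a partial allocation in which every player holds either one fat resource it values, or a bundle of thin resources of total value at least $T/(4+\epsilon)$; such an allocation is a $\frac{1}{4+\epsilon}$-approximation.

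Second, starting from the empty allocation, the algorithm would repeatedly select an unsatisfied player $p_0$, fix a ``desired configuration'' $C_0$ for $p_0$, and try to assign $C_0$ either directly or via a sequence of swaps. To discover such a swap I would grow an \emph{addable tree} whose nodes are pairs (player, desired configuration) and whose edges encode blocking relationships: a resource in some node's desired configuration is currently assigned to the player of another node. The essential modification from \cite{AFS08} is that the tree is expanded in \emph{batches}: a new layer $B_{t+1}$ contains \emph{all} players (together with associated configurations supplied by a fractional witness) whose current holdings block some resource demanded by nodes in $B_1\cup\cdots\cup B_t$. Whenever the aggregated layers expose an alternating sequence of swaps whose execution frees a fat resource or a thin bundle for $p_0$, the algorithm performs it and restarts; otherwise it adds another layer.

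Third, the heart of the argument is a geometric-growth lemma asserting that, as long as the tree is not ``addable'', $|B_1\cup\cdots\cup B_{t+1}|\geq (1+\Omega(\epsilon))\cdot|B_1\cup\cdots\cup B_t|$. I would prove this by invoking the configuration LP purely as an analytical tool: the LP gives each player a fractional convex combination of configurations of value at least $T$, and a careful averaging over the players currently in the tree shows that blocked resources account for only a small fraction of the desired fractional mass. The remaining mass must point to players outside the tree, each of whom is then forced into $B_{t+1}$, yielding the multiplicative growth. Iterating, the tree cannot survive beyond $O\!\left(\tfrac{1}{\epsilon}\log n\right)$ layers before its size would exceed $n$, so within that depth an addable sequence must appear.

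The principal obstacle is exactly this geometric-growth lemma. The original analysis in \cite{AFS08} only shows that a lexicographic potential strictly decreases after each swap, which guarantees termination but allows a chain of $2^{\Theta(n)}$ incremental moves; upgrading this to a quantitative statement requires simultaneously handling fat and thin configurations, defining ``blocker'' in a way that is stable under batch expansion, and using the LP witnesses so that failure to progress is converted into a strictly larger next layer. Once the lemma is established the complexity bound is immediate: the tree has depth $O(\tfrac{1}{\epsilon}\log n)$, each layer has at most $n$ nodes, and enumerating candidate minimal configurations for each node costs $n^{O(1)}$ per layer, giving an overall running time of $n^{O((1/\epsilon)\log n)}$ to satisfy every player.
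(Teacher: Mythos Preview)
Your plan is essentially the paper's own strategy: layer the alternating tree by distance, use the configuration LP only analytically to prove that the layers grow by a factor $1+\Omega(\epsilon)$ whenever no addable edge appears, and conclude a depth bound of $O(\tfrac{1}{\epsilon}\log n)$. The paper derives the growth lemma by exhibiting an unbounded \emph{dual} solution from a non-growing tree, whereas you propose a \emph{primal} averaging argument over the fractional configurations; these are two sides of the same coin and either works.

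The genuine gap is the last paragraph, where you assert that ``the complexity bound is immediate'' once the depth bound is known. It is not. The depth bound tells you only that \emph{each} alternating tree is shallow; it does not bound how many swap/collapse operations are needed before $p_0$ is finally satisfied. In the hypergraph setting a single free edge deep in the tree removes one blocker from its parent, but that parent may still have other blockers, so you perform a \emph{partial} collapse and must continue searching. Your description says the algorithm ``performs it and restarts'', but gives no potential that decreases across restarts, and the arithmetic you present (at most $n$ nodes per layer, $n^{O(1)}$ work per node, $O(\tfrac{1}{\epsilon}\log n)$ layers) multiplies out to a polynomial, not to $n^{O((1/\epsilon)\log n)}$. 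The paper closes this gap with a separate argument: it associates to the tree a \emph{signature vector} with one coordinate pair per layer, shows that every edge addition or partial collapse strictly decreases this vector lexicographically, and then counts that there are at most $|\players|^{O((1/\epsilon)\log |\players|)}$ possible signatures. That lexicographic-potential step, together with the rule of always adding an addable edge of \emph{minimum distance} (so that a collapse at level $\ell$ truly decreases the $\ell$-th coordinate), is what converts the depth bound into the stated running time, and it is missing from your proposal.
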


In Section~\ref{sec:algodesc}, we give an overview of the local search
of~\cite{AFS08} together with our modifications. The main modification
is that at each point of the local search, we carefully select which step
to take in the case of several options, whereas in the original
description~\cite{AFS08} an arbitrary choice was made. We then use
this more stringent description with a novel analysis (Section~\ref{sec:algoanal})
that uses the dual of the configuration LP as in~\cite{SME11}. The
main advantage of our analysis (of the modified local search) is that
it allows us to obtain a better upper bound on the search space of the
local search and therefore also a better bound on the run-time.
Furthermore, our techniques have the interesting property that
although we use the rather complex configuration LP in the analysis,
we never actually solve it.  This gives hope to the interesting
possibility of a polynomial time algorithm that is purely
combinatorial and efficient to implement (in contrast to solving the
configuration LP) with a good approximation ratio.

Finally, we note that our approach currently has a similar dependence on
$\epsilon$ as in the case of solving the configuration LP since, as
mentioned above, the linear program itself can only be solved
approximately. However, our hidden constants are small and for a
moderate $\epsilon$ we  expect that our combinatorial approach is
already more attractive than solving the configuration LP.

\section{The Configuration LP}
\label{sec:CLP}
The intuition of the configuration linear program (LP) is that any
allocation of value $T$ needs to allocate a bundle or configuration
$C$ of resources to each player $i$ so that $f_i(C) \geq T$.  Let
$\conf{i,T}$ be the set of those configurations that have value at
least $T$ for player $i$. In other words, $\conf{i,T}$ contains all
those subsets of resources that are feasible to allocate to player $i$
in an allocation of value $T$. For a guessed value of $T$, the
configuration LP therefore has a decision variable $x_{i, C}$ for each
player $i\in \players$ and configuration $C \in \conf{i,T}$ with the
intuition that this variable should take value one if and only if the
corresponding set of resources is allocated to that player. The configuration LP
$CLP(T)$ is a feasibility program and it is defined as follows:
\begin{equation*}
\boxed{%
        \begin{minipage}{10cm}%
          \begin{align*}
            \sum_{C\in \conf{i,T}} x_{i,C} &\geq 1  &  i \in \players \\[1mm]
            \sum_{i,C: j\in C, C\in \conf{i,T}} x_{i,C} &\leq 1 & j \in \res \\[1mm]
            x & \geq 0
          \end{align*}
          \vspace{-0.4cm}
        \end{minipage}%
}
\end{equation*}
The first set of constraints ensures that each player should receive
at least one bundle and the second set of constraints ensures that a
resource is assigned to at most one player.

If $CLP(T_0)$ for some $T_0$ is feasible, then $CLP(T)$ is also feasible for all
$T\leq T_0$, because $\conf{i, T_0}\subseteq \conf{i, T}$ and thus a solution to
$CLP(T_0)$ is a solution to $CLT(T)$ as well. Let $T_{OPT}$
be the maximum of all such values. Every
feasible allocation is a feasible solution of configuration LP, hence
$T_{OPT}$ is an upper bound on the value of the optimal allocation.

We note that the LP has exponentially many constraints; however, it is
known that one can approximately solve it to any desired
accuracy by designing a polynomial time (approximate) separation algorithm for the
dual~\cite{BS06}. Although our approach does not require us to solve the linear
program, the dual shall play an important role in our analysis.  By
associating a variable $y_i$ with each constraint in the first set of
constraints, a variable $z_j$ with each constraint in the second set of
constraints, and letting the primal have the objective function of
minimizing the zero function, we obtain the dual program:
\begin{equation*}
\boxed{%
        \begin{minipage}{10cm}%
          \begin{align*}
            \max & \mbox{  } \sum_{i\in \players} y_i - \sum_{j\in \res} z_j  && \\[2mm]
            y_i &\leq             \sum_{j\in C} z_j &  i\in \players, C \in
            \conf{i, T}\\[1mm]
            y,z & \geq 0
          \end{align*}
          \vspace{-0.4cm}
        \end{minipage}%
}
\end{equation*}
\section{Local Search with Better Run-time Analysis}

In this section we modify the algorithm by Asadpour et
al.~\cite{AFS08} in order to significantly improve the run-time
analysis: we obtain a $1/(4+\epsilon)$-approximate solution in
run-time bounded by $n^{O(1/\epsilon \log n)}$ whereas the original
local search is only known to converge in time $2^{O(n)}$. For better
comparison, we can write $n^{O(1/\epsilon \log n)} = 2^{O(1/\epsilon
  \log^2 n)}$. Moreover, our modification has the nice side effect
that we actually never solve the complex configuration LP --- we only
use it in the analysis.

\subsection{Description of Algorithm}
\label{sec:algodesc}
Throughout this section we assume that $T$ --- the guessed optimal value --- is such that $CLP(T)$ is feasible. We
shall find an $1/\alpha$ approximation where $\alpha$ is a parameter such that
$\alpha>4$. As we will see, the selection of $\alpha$ has the following
trade-off: the closer $\alpha$ is to $4$ the worse bound on the run-time we get.

We note that if $CLP(T)$ is not feasible and thus $T$ is more than $T_{OPT}$,
our algorithm makes no guarantees. It might fail to find an allocation, which
means that $T>T_{OPT}$. We can use this for a standard binary search  on the
interval $[0, \frac{1}{|\players|}\sum_i v_i]$ so that in the end we find an
allocation with a value at least $T_{OPT}/\alpha$.

\subsubsection{Max-min fair allocation is a bipartite hypergraph problem.}
Similar to~\cite{AFS08}, we view the max-min fair allocation problem as a
matching problem in the bipartite hypergraph $G=(\players, \res, E)$. Graph $G$
has an hyperedge $\{i\} \cup C$ for each player $i\in \players$ and
configuration $C \subseteq \res$ that is feasible with respect to the desired
approximation ratio $1/\alpha$, i.e., $f_i(C) \geq T/\alpha$, and minimal in the
sense that $f_i(C') < T/\alpha$ for all $C' \subset C$. Note that the graph
might have exponentially many edges and the algorithm therefore never keeps
an explicit representation of all edges.

From the construction of the graph it is clear that a matching covering all
players corresponds to a solution with value at least $T/\alpha$. Indeed, given
such a matching $M$ in this graph, we can assign matched resources to the
players and everyone gets resources with total value of at least $T/\alpha$.

\subsubsection{Alternating tree of ``add'' and ``block'' edges.}

The algorithm of Asadpour et al.~\cite{AFS08} can be viewed as
follows. In the beginning we start with an empty matching and then we
increase its size in every iteration by one, until all players are
matched. In every iteration we build an alternating tree rooted in a
currently unmatched player $p_0$ in the attempt to find an alternating
path to extend our current matching $M$. The alternating tree has two
types of edges: edges in the set $A$ that we wish to \emph{add} to the
matching and edges in the set $B$ that are currently in the matching
but intersect edges in $A$ and therefore \emph{block} them from being
added to the matching. While we are building the alternating tree to
find an alternating path, it is important to be careful in the
selection of edges, so as to guarantee eventual termination. As
in~\cite{AFS08}, we therefore define the concept of addable and
blocking edges.

Before giving these definitions, it will be convenient to introduce
the following notation. For a set of edges $F$, we denote by $F_\res$
all resources contained in edges in $F$ and similarly $F_\players$
denotes all players contained in edges in $F$. We also write $e_\res$
instead of $\{e\}_\res$ for an edge $e$ and use $e_\players$ to denote
the player in $e$.

\begin{definition}
We call an edge $e$ addable if $e_\res\cap (A_\res\cup B_\res) = \emptyset$ and
$e_\players\in \{p_0\}\cup A_\players\cup B_\players$.
\end{definition}
\begin{definition}
An edge $b$ in the matching $M$ is blocking $e$ if
$e_\res\cap b_\res \not= \emptyset$.
\end{definition}
Note that an addable edge matches a player in the tree with resources
that currently do not belong to any edge in the tree and that the
edges blocking an edge $e$ are exactly those in the matching that
prevent us from adding $e$. For a more intuitive understanding of
these concepts see Figure \ref{fig:alter-tree} in Section \ref{sec:example}.

The idea of building an alternating tree is similar to standard matching
algorithms using augmenting paths. However, one key difference is that the matching can be
extended once an alternating path is found in the graph case, whereas
the situation is more complex in the considered hypergraph case, since
a single hyperedge might overlap several hyperedges in the matching. It is due
to this complexity that it is more difficult to bound the running time
of the hypergraph matching algorithm of~\cite{AFS08} and our improved
running time is obtained by analyzing a modified version where we
carefully select in which order the edges should be added to the
alternating tree and drop edges from the tree beyond certain distance.

We divide resources into 2 groups. \emph{Fat resources} have value at least
$T/\alpha$ and \emph{thin resources} have less than $T/\alpha$. Thus any edge
containing a fat resource contains only one resource and is called \emph{fat
edge}. Edges containing thin resources are called \emph{thin edges}. Our
algorithm always selects an addable edge of minimum distance to the root $p_0$
according to the following convention. The length of a thin edge in the tree is
one and the length of a fat edge in the tree is zero. Edges not in the tree have
infinite length. Hence, the \emph{distance of a vertex} from the root is the
number of thin edges between the vertex and the root and, similarly, the
\emph{distance of an edge $e$} is the number of thin edges on the way to $e$
from $p_0$ including $e$ itself. We also need to refer to distance of an
addable edge that is not yet in the tree. In that case we take the distance as
if it was in the tree. Finally, by the \emph{height of the alternating tree} we
refer to the maximum distance of a resource from the root.

\subsubsection{Algorithm for extending a partial matching.}
\begin{algorithm}[h!]
\DontPrintSemicolon
\caption{Increase the size of the matching}
\label{increase-match}
\SetKwInOut{Input}{Input}\SetKwInOut{Output}{Output}
\Input{A partial matching $M$}
\Output{A matching of increased size assuming that $T$ is at most $T_{OPT}$}
\BlankLine
 Find an unmatched player $p_0\in \players$, make it a root of the alternating tree\;
\While{there is an addable edge within distance $2\log_{(\alpha-1)/3}(|\players|)+1$}{

  \Indp
  Find an addable edge $e$ of minimum distance from the root\;
  $A\leftarrow A\cup \{e\}$\;
  \eIf {$e$ has blocking edges $b_1, \dots, b_k$} {
    \Indp
    $B\leftarrow B\cup \{b_1, \dots, b_k\}$\;
    \Indm
  }(\tcp*[h]{collapse procedure})
  {
    \Indp
    \While{$e$ has no blocking edges} {
      \Indp
      \eIf {there is an edge $e'\in B$ such that $e'_\players=e_\players$} {
        \Indp
        $M \leftarrow M\setminus \{e'\}\cup \{e\}$\;
        $A\leftarrow A\setminus \{e\}$\;
        $B \leftarrow B\setminus \{e'\}$\;
        Let $e''\in A$ be the edge that $e'$ was blocking\;
        $e\leftarrow e''$\;
        \Indm
      }
      {
        \Indp
        $M\leftarrow M\cup\{e\}$\;
        \Return $M$\;
        \Indm
      }
    \Indm
    }
    Let $e'$ be the blocking edge that was last removed from $B$\;
    Remove all edges in $A$ of greater distance than $e'$ and the edges in $B$ that blocked these edges\;
    \Indm
  }
}
\Return $T_{OPT}$ is less than $T$
\end{algorithm}

Algorithm \ref{increase-match} summarizes the modified procedure for increasing
the size of a given matching by also matching a previously unmatched player
$p_0$. For better understanding of the algorithm, we included an example of an
algorithm execution in Figure \ref{fig:alter-tree} in Section
\ref{sec:example}.

Note that the algorithm iteratively tries to find addable edges of
minimum distance to the root. On the one hand, if the picked edge $e$
has blocking edges that prevents it from being added to the matching,
then the blocking edges are added to the alternating tree and the
algorithm repeatedly tries to find addable edges so as to make
progress by removing the blocking edges.

On the other hand, if edge $e$ has no blocking edges, then this means that the set of
resources $e_\res$ is free, so we make progress by adding $e$ to the matching $M$. If the
player was not previously matched, it is the root $p_0$ and we
increased the size of the matching. Otherwise the player $e_\players$
was previously matched by an edge $e'\in B$ such that
$e'_\players=e_\players$, so we remove $e'$ from $M$ and thus
it is not a blocker anymore and can be removed
from $B$. This removal has decreased the number of blockers for an
edge $e''\in A$. If $e''$ has $0$ blockers, we recurse and repeat the same
procedure as with $e$. Note that this situation can be seen on Figure
\ref{fig:alt-2} and \ref{fig:alt-3} in Section \ref{sec:example}.

\subsection{Example of Algorithm Execution}
\label{sec:example}
\begin{figure}[h]
  \begin{center}
    \begin{tabular}{| c | c |}
      \hline
      \subfigure[Step 1]{
        \includegraphics[page=1, width=0.39\textwidth]{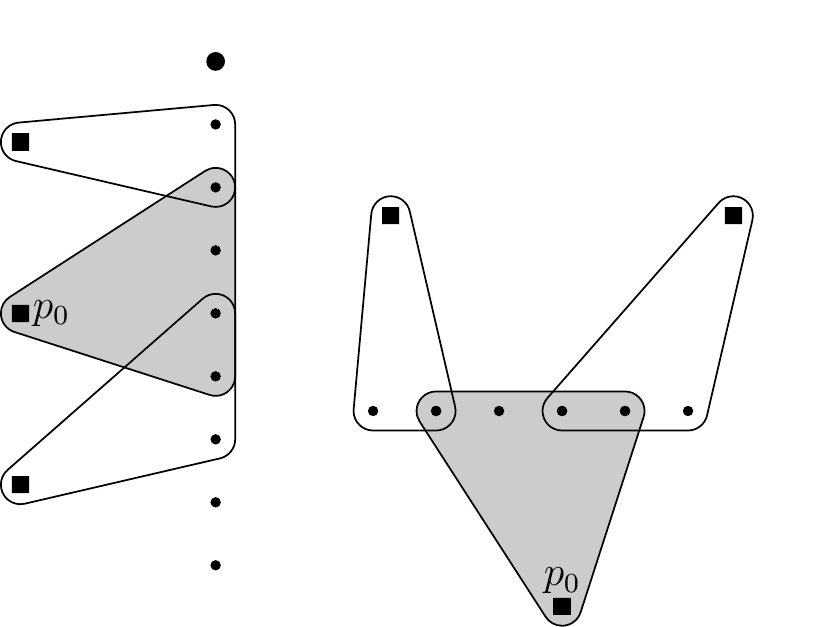}
        \label{fig:alt-1}
      }&
      \subfigure[Step 2]{
        \includegraphics[page=2, width=0.39\textwidth]{matching-example-1}
        \label{fig:alt-2}
      }\\
      \hline
      \subfigure[Step 3]{
        \includegraphics[page=3, width=0.39\textwidth]{matching-example-1}
        \label{fig:alt-3}
      }&
      \subfigure[Step 4]{
        \includegraphics[page=4, width=0.39\textwidth]{matching-example-1}
        \label{fig:alt-4}
      }\\
      \hline
    \end{tabular}
  \end{center}
  \caption{Alternating tree visualization. The right part of every picture
  is the alternating tree and to the left we display the positions of edges in
  the tree in the bipartite graph. Gray edges are in the set $A$ and white edges
  are in the set $B$.}
  \label{fig:alter-tree}
\end{figure}

Figure \ref{fig:alter-tree} is a visualization of an execution of Algorithm
\ref{increase-match}. The right part of every picture is the alternating tree
and to the left we display the positions of edges in the tree in the bipartite
graph.  Gray edges are $A$-edges and white are $B$-edges.

In Figure \ref{fig:alt-1} we start by adding an $A$-edge to the tree. There are
2 edges in the matching intersecting this edge, so we add them as blocking
edges. Then in Figure \ref{fig:alt-2} we add a fat edge that has no blockers, so
we add it to the matching and thus remove one blocking edge, as we can see in
Figure \ref{fig:alt-3}. Then in Figure \ref{fig:alt-4} we add a thin edge which
has no blockers. Now the $A$ and $B$ edges form an alternating path, so by
swapping them we increase the size of the matching and the algorithm terminates.

Note that the fat edge in step 2 is added before the thin edge from step 4,
because it has shorter distance from the root $p_0$. Recall that the distance of
an edge $e$ is the number of thin edges between $e$ and the root including $e$,
thus the distance of the fat edge is 2 and the distance of the thin edge is 3.

\subsection{Analysis of Algorithm}

\label{sec:algoanal}

Let the parameter $\alpha$ of the algorithm equal $4+\epsilon$ for some
$\epsilon\in (0, 1]$. We first prove that Algorithm~\ref{increase-match} terminates in
time $n^{O\left(\frac{1}{\epsilon}\log n\right)}$ where $n=|\players| + |\res|$
and, in the following subsection, we show that it returns a matching of
increased size if $CLP(T)$ is feasible.

Theorem~\ref{thm:main}~then follows from that, for each guessed value of $T$,
Algorithm~\ref{increase-match} is at most invoked $n$ times and we can find the
maximum value $T$ for which our algorithm finds an allocation by binary search
on the interval $[0, \frac{1}{|\players|}\sum_i v_i]$. Since we can assume that the numbers in the
input have bounded precision, the binary search only adds a polynomial factor to
the running time.

\subsubsection{Run-time Analysis.}
We bound the running time of Algorithm~\ref{increase-match} using that
the alternating tree has height at most
$O(\log_{(\alpha-1)/3}|\players|) = O\left(\frac{1}{\epsilon} \log
|\players|\right)$. The proof is similar to the termination proof
in~\cite{AFS08} in the sense that we associate a signature vector with
each tree and then show that its lexicographic value
decreases. However, one key difference is that instead of associating a
value with \emph{each edge} of type $A$ in the tree, we associate a
value with each ``layer'' that consists of \emph{all edges} of a certain
distance from the root. This allows us to directly associate the
run-time with the height of the alternating tree.

When considering an alternating tree it is convenient to partition $A$ and $B$
into $A_0, A_1, \dots, A_{2k}$ and $B_0, B_1, \dots, B_{2k}$ respectively by the
distance from the root, where $2k$ is the maximum distance of an edge in the
alternating tree (it is always an even number). Note that $B_i$ is empty for all
odd $i$. Also, $A_{2i}$ contains only fat edges and $A_{2i+1}$ only thin edges.
For a set of edges $F$ we denote by $F^t$ all the thin edges in $F$ and by $F^f$
all the fat edges in $F$.  For a set of edges $F$ denote by $F^t$ all the thin
edges in $F$ and by $F^f$ all the fat edges in $F$. We also use $\res^t$ to
denote thin resources and $\res^f$ to denote fat resources.

\begin{lemma}
\label{lemma:termination}
For a desired approximation guarantee of $1/\alpha = 1/(4+\epsilon)$, Algorithm~\ref{increase-match} terminates in time
$n^{O\left(\frac{1}{\epsilon}\log n\right)}$.
\end{lemma}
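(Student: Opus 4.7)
My plan is to bound the number of iterations of the outer while loop by a lexicographic potential argument on a signature vector indexed by the layers of the alternating tree, and then to observe that each iteration performs only polynomial work. For $\alpha = 4+\epsilon$ the loop guard restricts every edge in the tree to distance at most $H = 2\log_{(\alpha-1)/3}|\players|+1$, and since $(\alpha-1)/3 = 1 + \epsilon/3$ and $\epsilon \in (0,1]$ we get $H = O(\epsilon^{-1}\log n)$. So only layers $0,1,\ldots,H$ can ever be non-empty.

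I would then define a layer signature $\sigma = (\sigma_0, \sigma_1, \ldots, \sigma_H)$ with $\sigma_i$ summarizing the content of layer $i$, for instance $\sigma_i = |A_i|$, possibly refined by appending $|A_i^f|$, $|A_i^t|$, or $|B_i|$ to break ties. Each component is at most $|\players| \le n$, because every $A$-edge is uniquely indexed by its player vertex, so the number of distinct signatures is at most $n^{O(H)} = n^{O(\epsilon^{-1}\log n)}$. The aim is to show that every completed pass of the outer while loop strictly changes $\sigma$ in a fixed lexicographic direction, reading from the shallowest component down; this immediately caps the number of outer iterations at $n^{O(\epsilon^{-1}\log n)}$.

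The monotonicity splits into two cases. When the chosen minimum-distance addable edge $e$ at depth $d$ has blockers, all layers with index $<d$ are untouched and $\sigma_d$ strictly grows, so $\sigma$ lex-advances at position $d$. The main obstacle will be the collapse case: after the inner loop walks up the tree swapping edges into $M$ and the pruning step removes every subtree below the last blocker $e'$ removed from $B$, I need to argue that the shallowest layer affected is the layer $d^*$ of $e'$ (necessarily $d^* \le d$), and that $\sigma_{d^*}$ changes in the same lex direction as in the simple case. The fact that $e$ was picked at \emph{minimum} distance should be essential here, since it guarantees that before the iteration no addable edge existed at depth $< d$, which in turn forces the surviving contents of layer $d^*$ to strictly advance once the collapse completes (a refined definition of $\sigma_{d^*}$, combining add-edges and their matched blockers, may be needed to make this strict). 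Once lex-monotonicity is established, multiplying by the polynomial per-iteration cost (a BFS-style search for a minimum-distance addable edge plus at most $H$ collapse steps) and by the polynomial overhead of the outer binary search over $T$ yields the claimed bound $n^{O(\epsilon^{-1}\log n)}$.
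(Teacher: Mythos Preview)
Your plan is the paper's approach: a layer-indexed signature of length $O(\epsilon^{-1}\log n)$ that moves strictly in one lexicographic direction at every iteration of the outer loop, giving at most $n^{O(\epsilon^{-1}\log n)}$ iterations. The height bound and the counting of signatures are exactly as in the paper.

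The gap is precisely the collapse case you yourself flag. Your tentative choice $\sigma_i=|A_i|$ does \emph{not} move at the shallowest affected layer: when the collapse stops at the layer $d^*$ of the last removed blocker $e'$, all $A$-edges at depth $\le d^*$ are still present, so $|A_{d^*}|$ is unchanged and the signature has not advanced at position $d^*$. What does change there is the number of \emph{blocking} edges, and only the thin ones. The paper's signature therefore interleaves $-|A|$ and $+|B|$ counts, separated into fat and thin parts and ordered so that $|B^t_{2\ell}|$ precedes $-|A^f_{2\ell}|$ and $|B^f_{2\ell}|$; the key observation you are missing is that the last removed blocker $e'$ is necessarily \emph{thin} (were it fat, it would be the unique blocker of a fat $A$-edge, which would then be blocker-free and contradict ``last''), so the collapse strictly decreases the entry $|B^t_{d^*}|$ while leaving all earlier entries intact. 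Without this fat/thin split and this observation, the lex-monotonicity argument does not close.

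A minor correction: your claim that $|A_i|\le|\players|$ because ``every $A$-edge is uniquely indexed by its player vertex'' is not justified---nothing prevents several addable edges from sharing a player. The bound holds instead because every $A$-edge in the tree (except possibly one just added) has at least one blocker not shared with any other $A$-edge, and blockers are matching edges, hence at most $|\players|$ of them.
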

\begin{proof}
We analyze the run-time of Algorithm~\ref{increase-match} by
associating a signature vector with the alternating tree of each
iteration. The signature vector of an alternating tree is then defined to be
\begin{align*}
 ( -|A^f_0|, |B^f_0|, &-|A^t_1|, |B^t_2|,  -|A^f_2|, |B^f_2|, \\
&-|A^t_3|, |B^t_4|,  -|A^f_4|, |B^f_4|, \\
&  \vdots \\
& -|A^t_{2k-1}|, |B^t_{2k}|,  -|A^f_{2k}|, |B^f_{2k}|, \infty).
\end{align*}

We prove that each addition of an edge decreases the lexicographic value of
the signature or increases the size of the matching.

On the one hand, if we add an edge with no blocking edges, we either completely
collapse the alternating tree or collapse only a part of it.
If we completely collapse the tree then the algorithm terminates. Otherwise, let
$e'$ be the last blocking edge that was removed from $B$ by the algorithm during
the collapse procedure. Also let $B'$ and $A'$ be the sets of  blocking edges
and addable edges obtained after the collapse procedure. Note that $e'$ is a
thin edge because otherwise $e'$ was blocking a fat edge $e$ that after the
removal of $e'$ had no more blocking edges which in turn contradicts that $e'$
was the last blocking edge removed from $B$. Let $2\ell$ be the distance of
$e'$, i.e., $e' \in B^t_{2\ell}$. As the algorithm drops all edges in $A$ of
distance at least $2 \ell+1$ and all edges in $B$ that  blocked these edges, the
partial collapse of the tree changes its 
signature to $$( -|A'^f_0|, |B'^f_0|, -|A'^t_1|, |B'^t_2|,  -|A'^f_2|, |B'^f_2|, \dots, -|A'^t_{2\ell-1}|,
|B'^t_{2\ell}|,  -|A'^f_{2\ell}|, |B'^f_{2\ell}|, \infty),$$ which equals
$$( -|A^f_0|, |B^f_0|, -|A^t_1|, |B^t_2|,  -|A^f_2|, |B^f_2|, \dots, -|A^t_{2\ell-1}|,
|B^t_{2\ell}|-1,  -|A'^f_{2\ell}|, |B'^f_{2\ell}|, \infty).$$
Thus we either
increase the size of the matching or decrease the signature of the alternating
tree.

On the other hand, if the added edge $e$ has blocking edges, there are two
cases. We either open new layers $A_{2k+1}=\{e\}$ and $B_{2k+2}$ where $e$ is a
thin edge and the signature gets smaller, since $-|A^t_{2k+1}|<\infty$. If we do
not open a new layer, we increase the size of some $A_\ell$ by either a thin or fat edge and
$-(|A_\ell|+1)<-|A_\ell|$, so in this case the signature decreases too.

The algorithm only runs as long as the height of the alternating tree is at most
$O(\log_{(\alpha-1)/3}|\players|)=O(\log_{1+\epsilon/3} |\players|)$. This can
be rewritten as
$
  O\left(\frac{\log |\players|}{\log (1+\epsilon/3)}\right)=
  O\left(\frac{\log |\players|}{\epsilon}\right)
$
where the equality follows from $x\leq 2\log(1+x)$ for $x\in(0,1]$ and we only
consider $\epsilon\in(0, 1]$. There are at most
$|\players|$ possible values for each position in a signature, so the total
number of signatures encountered during the execution of
Algorithm~\ref{increase-match} is $|\players|^{O\left(\frac{1}{\epsilon} \log
|\players|\right)}$. As adding an edge happens in polynomial time in $n =
|\players| + |\res|$, we conclude that Algorithm~\ref{increase-match} terminates
in time $n^{O\left(\frac{1}{\epsilon}\log n\right)}$. \qed
\end{proof}

\subsubsection{Correctness of Algorithm~\ref{increase-match}.}
\label{sec:correctness}
We show that Algorithm~\ref{increase-match} is correct, i.e., that it
returns an increased matching if $CLP(T)$ is feasible.

We have already proved that the algorithm terminates in
Lemma~\ref{lemma:termination}. The statement therefore follows from
proving that the condition of the while loop always is satisfied
assuming that the configuration LP is feasible. In other words, we
will prove that there always is an addable edge within the required
distance from the root. This strengthens the analogous statement
of~\cite{AFS08} that states that there always is an addable edge (but
without the restriction on the search space that is crucial for our run-time analysis).  We shall do so by proving
that the number of thin blocking edges increases quickly with respect
to the height of the alternating tree and, as there cannot be more
than $|\players|$ blocking edges, this in turn bounds the height of
the tree.

We are now ready to state the key insight behind the analysis that shows that
the number of blocking edges increases as a function of $\alpha$ and the height
of the alternating tree.
\begin{lemma}
Let $\alpha >4$. Assuming that $CLP(T)$ is feasible, if there is no
addable edge $e$ within distance $2D+1$ from the root for some integer $D$, then
\begin{equation*}
\frac{\alpha-4}{3} \sum_{i=1}^D |B^t_{2i}| < |B^t_{2D+2}|.
\end{equation*}
\label{lem:combinatorial}
\end{lemma}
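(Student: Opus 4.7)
The plan is to use LP duality on $CLP(T)$. Since $CLP(T)$ is feasible by hypothesis, any feasible dual solution $(y, z)$ satisfies $\sum_{p} y_p \leq \sum_{r} z_r$; I would construct such a dual from the alternating tree so that this inequality becomes the desired bound on thin blockers.

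\textbf{Dual construction and feasibility.} Let $P$ be the set of players in the alternating tree at distance at most $2D$. Set $y_p = T(\alpha-1)/\alpha$ for $p \in P$ and zero otherwise, and $z_r = v_r$ for $r \in A_\res \cup B_\res$ with $z_r = 0$ outside the tree. The key combinatorial fact needed for dual feasibility is that for every $p \in P$ and every $C \in \conf{p,T}$, the tree-intersection satisfies $f_p(C \cap (A_\res \cup B_\res)) > T(\alpha-1)/\alpha$. Indeed, if it were at most this value, then $f_p(C \setminus (A_\res \cup B_\res)) \geq T/\alpha$, so by minimality this complement contains a hypergraph edge $e$ with $e_\players = p$ and $e_\res$ disjoint from the tree. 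Since $p$ lies at distance at most $2D$, the edge $e$ would be addable at distance at most $2D+1$ (fat or thin), contradicting the hypothesis.

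\textbf{Extracting the inequality.} The dual inequality then reads $|P| \cdot T(\alpha-1)/\alpha < \sum_{r \in A_\res \cup B_\res} v_r$, strict because of the strict combinatorial inequality above. I would bound the right-hand side by splitting fat and thin contributions: fat resources contribute at most $FT$ where $F$ is the number of fat $A$-edges in the tree at distance at most $2D$ (using $v_r \leq T$, which we may assume by preprocessing), while thin resources contribute at most $(2T/\alpha)|A^t| + (T/\alpha)|B^t|$. The factor $T/\alpha$ in the second thin term uses minimality of blockers: for each $b \in B^t$ with unique $A$-edge $e(b)$ that it blocks, an easy case analysis on whether $|b \cap e(b)| = 1$ or $\geq 2$ combined with $v(b \setminus r) < T/\alpha$ for every $r \in b$ shows $v(b \setminus e(b)) < T/\alpha$. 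Substituting $|P| = 1 + F + \sum_{i=1}^D |B^t_{2i}|$ (root, fat-matched players, thin-matched players) together with the structural bound $|A^t_{2i-1}| \leq |B^t_{2i}|$ (each thin $A$-edge in the tree has at least one blocker in the next even layer, so $|A^t| \leq \sum_{i=1}^{D+1} |B^t_{2i}|$) should rearrange into the stated inequality.

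\textbf{Main obstacle.} The delicate step is the algebraic rearrangement. The constant $(\alpha-4)/3$ emerges as $(\alpha - 1 - 3)/3$, with $\alpha-1$ coming from the LP scaling and $3 = 2+1$ from summing the thin contributions of $|A^t|$ and $|B^t|$. However, the naive bookkeeping leaves a residual term proportional to the fat count $F$, and removing it requires exploiting the pairing $|A^f_{2i}| = |B^f_{2i}|$: each fat $A$-edge in the tree contributes both one player to $P$ on the left-hand side and one fat resource of value at most $T$ to the right-hand side, so the fat demand and fat supply must be matched carefully so that they cancel on the two sides of the dual inequality, leaving exactly $\frac{\alpha-4}{3}\sum_{i=1}^D |B^t_{2i}| < |B^t_{2D+2}|$. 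Making this cancellation explicit (perhaps by redistributing the dual weights among fat-matched and thin-matched players, or by a direct combinatorial charging argument) is the technical crux.
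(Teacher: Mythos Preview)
Your overall approach is the same as the paper's --- construct a dual solution from the alternating tree and use weak LP duality --- and the obstacle you flag is exactly the right one. With your uniform choice $z_r = v_r$ the fat terms do \emph{not} cancel: the $y$-side carries $F\cdot T(\alpha-1)/\alpha$ while the $z$-side carries up to $FT$, leaving a residual of order $FT/\alpha$ on the wrong side that can swamp the inequality whenever the tree has many fat edges. No pairing $|A^f_{2i}|=|B^f_{2i}|$ or charging argument fixes this, because the coefficients on the two sides are genuinely different. The paper's resolution is to change the dual itself: set $z_j = T(\alpha-1)/\alpha$ for each fat tree resource $j$ (at distance $\le 2D$), keep $z_j = v_j$ for thin tree resources (at distance $\le 2D+2$), and $z_j=0$ elsewhere. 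Feasibility still holds, since any configuration $C\in\conf{i,T}$ for a tree player $i$ that contains a fat tree resource already satisfies $\sum_{j\in C} z_j \ge T(\alpha-1)/\alpha = y_i$, while configurations avoiding all fat tree resources fall back to your thin-only argument. With this choice the fat contribution to $\sum z$ is exactly $F\cdot T(\alpha-1)/\alpha$, cancelling the fat part of $\sum y$, and one is left with
\[
\frac{\alpha-1}{\alpha}\Bigl(1+\sum_{i=1}^{D}|B^t_{2i}|\Bigr) \;\le\; \frac{3}{\alpha}\sum_{i=1}^{D+1}|B^t_{2i}|,
\]
using $|A^t_{2i-1}|\le |B^t_{2i}|$; this rearranges to $(\alpha-1)+(\alpha-4)\sum_{i=1}^D|B^t_{2i}|\le 3|B^t_{2D+2}|$, and hence to the lemma.

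Two smaller points. The strict inequality in the lemma does not come from your strict combinatorial bound $f_p(C\cap\ldots)>T(\alpha-1)/\alpha$ --- weak duality only gives $\sum y\le\sum z$ --- but from the ``$+1$'' for the root $p_0$ (the term $\alpha-1>0$ in the rearrangement above). And you should restrict the support of $z$ to the truncated tree (edges of distance at most $2D+1$ together with $B^t_{2D+2}$) rather than all of $A_\res\cup B_\res$; otherwise deeper layers of the current tree inflate $\sum z$ without being needed for dual feasibility.
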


Before giving the proof of Lemma \ref{lem:combinatorial}, let us see how it
implies that there always is an addable edge within distance $2
\log_{(\alpha-1)/3}(|\players|)+1$ from the root assuming the configuration LP
is feasible, which in turn implies the correctness of
Algorithm~\ref{increase-match}.

\begin{corollary}
  \label{cor:height}
  If $\alpha>4$ and $CLP(T)$ is feasible, then there is always an addable edge
  within distance $2D+1$ from the root, where $D =
  \log_{(\alpha-1)/3}|\players|$.
\end{corollary}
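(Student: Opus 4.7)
The plan is to argue by contradiction. Suppose that no addable edge exists within distance $2D+1$ of the root, where $D = \log_{(\alpha-1)/3}|\players|$. Because the hypothesis ``no addable edge within distance $2D'+1$'' continues to hold for every integer $D' \le D$ (a shorter radius can only contain fewer edges), we may invoke Lemma~\ref{lem:combinatorial} once for each $D' \in \{0, 1, \dots, D\}$, and combine the resulting inequalities.

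Define the cumulative thin-blocker count $S_k := \sum_{i=1}^{k} |B^t_{2i}|$. For each $D' \ge 1$, Lemma~\ref{lem:combinatorial} gives $\tfrac{\alpha-4}{3}\,S_{D'} < |B^t_{2D'+2}| = S_{D'+1} - S_{D'}$, which rearranges to the geometric recurrence
\[
  S_{D'+1} \;>\; \tfrac{\alpha-1}{3}\,S_{D'}.
\]
The $D'=0$ case of the lemma says $0 < |B^t_2|$, so $S_1 \ge 1$, supplying a nontrivial base case. Iterating the recurrence $D$ times yields
\[
  S_{D+1} \;>\; \Bigl(\tfrac{\alpha-1}{3}\Bigr)^{D} S_1 \;\ge\; \Bigl(\tfrac{\alpha-1}{3}\Bigr)^{D} \;=\; |\players|
\]
by the defining choice of $D$. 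However, every blocking edge is a distinct edge of the matching $M$, which contains at most $|\players|$ edges (one per player), so $S_{D+1} \le |B^t| \le |M| \le |\players|$, a contradiction.

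The only delicate point is the base case: one must confirm that Lemma~\ref{lem:combinatorial} is genuinely applicable at $D'=0$, yielding $|B^t_2|\ge 1$. Without it the recurrence is vacuously satisfied by the all-zero sequence and no contradiction arises. Once this seed is secured, the remainder is the routine unrolling of the geometric growth and a comparison against the trivial upper bound $|\players|$ on matching edges; the factor $(\alpha-1)/3$ is precisely engineered so that the height bound $D = \log_{(\alpha-1)/3}|\players|$ matches this counting argument, which in turn is what drives the quasi-polynomial run-time in Lemma~\ref{lemma:termination}.
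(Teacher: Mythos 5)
Your proposal is correct and follows essentially the same argument as the paper: iterate Lemma~\ref{lem:combinatorial} to obtain the geometric recurrence $S_{D'+1} > \frac{\alpha-1}{3}S_{D'}$ on the cumulative thin-blocker counts, then compare $S_{D+1}$ against the trivial bound $|\players|$. The one cosmetic difference is that you supply the seed $S_1\ge 1$ by invoking the lemma at $D'=0$, whereas the paper handles the $|B^t_2|=0$ case by a separate (slightly more awkward) case split; your treatment is arguably cleaner but not a different route.
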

\begin{proof}
The proof of the corollary follows intuitively from that Lemma~\ref{lem:combinatorial} says that
the number of blocking edges increases  exponentially in terms of the height of the tree and
therefore, as there are at most $|\players|$ blocking edges, the height must be $O_\alpha(
\log|\players|)$. We now proceed with the formal proof.

Let us first consider the case when $|B_2^t|=0$, i.e., there are no
thin edges in the alternating tree, so its height is $0$. Then there must be an
addable edge (of distance at most $1$), since otherwise, by the above lemma, we
get a contradiction $0=(\alpha-4)/3|B_2^t|<|B_4^t|=0$.

From now on assume that $|B_2^t|\geq 1$ and suppose toward contradiction that
there is no addable edge within distance $2D+1$. Let
\begin{equation*}
  \text{$b_i=\sum_{j=1}^i |B_{2j}^t|$ and $q=(\alpha-4)/3$.}
\end{equation*}
By Lemma~\ref{lem:combinatorial},
\begin{equation*}
  \text{$q b_i<|B^t_{2i+2}|$ for $i\leq D$ and $b_{i+1}=b_i+|B^t_{2i+2}|$,}
\end{equation*}
so $b_{i+1}>(1+q)b_i$ for all $i\leq D$, which in turn implies
\begin{equation*}
b_{D+1} = \sum_{j = 1}^{D+1} |B^t_{2j}| >(1+q)^{D}b_1  \geq (1+q)^D = |\players|,
\end{equation*}
where the last equality follows by the selection of $D$.  However, this is a
contradiction since the number of blocking edges and hence $b_{D+1}$ is at
most the number of players $|\players|$.
\qed

\end{proof}
We complete the correctness analysis of the algorithm by presenting the  proof of the key lemma.

\begin{proof}[Lemma~\ref{lem:combinatorial}]
Let $H_{2D+1}$ be the tree formed from the original alternating tree by taking
all edges of distance at most $2D+1$ plus edges in the set $B^t_{2D+2}$. The
following invariant holds throughout the execution of Algorithm
\ref{increase-match} and plays an important role in the analysis:
If there is an addable edge $e$ with respect to $H_{2D+1}$ within distance
$2D+1$, then $e$ is an addable edge within distance $2D+1$ with respect to the
original tree. Hence, in the proof of this lemma we only need to consider edges
in $H_{2D+1}$. The invariant trivially holds in the beginning of the algorithm
and is preserved when adding an edge with blockers, because an edge of minimum
distance is selected. The situation is more complex when an edge has no
blockers. Dropping off the edges beyond certain distance in Algorithm
\ref{increase-match} ensures that the invariant remains true even in this case.

Suppose toward contradiction that there is no addable edge within
distance $2D+1$ and
\begin{equation*}
  \frac{\alpha-4}{3} \sum_{i=1}^D |B^t_{2i}| \geq |B^t_{2D+2}|.
\end{equation*}
We show that this implies that the dual of the configuration LP is
unbounded, which in turn contradicts the assumption that the primal is feasible.
Recall that the objective function of the dual is $\max \sum_{i\in \players} y_i
- \sum_{j\in \res} z_j$.  Furthermore, as each solution $(y,z)$ of the dual can
be scaled by a scalar $c$ to obtain a new solution $(c\cdot y, c \cdot z)$, any
solution with positive objective implies unboundedness.

We proceed by defining such solution $(y^*, z^*)$, that is determined by the alternating tree.
More precisely, we take
\begin{equation*}
   y^*_i = \begin{cases}
     \frac{\alpha-1}{\alpha}  & \mbox{if $i\in\players$ is within distance $2D$
     from the root},\\
     0 & \mbox{otherwise,}
   \end{cases}
\end{equation*}
and
\begin{equation*}
  z_j^* = \begin{cases}
    (\alpha-1)/\alpha  & \mbox{if $j\in\res$ is fat and within distance $2D$
    from the root,}\\
    v_j/T & \mbox{if $j\in\res$ is thin and within distance $2D+2$ from the
    root,} \\
    0 & \mbox{otherwise.}
  \end{cases}
\end{equation*}

Let us first verify that $(y^*, z^*)$ is indeed a feasible solution. We have
chosen all $y_i, z_j$ to be non-negative, so it only remains to check the first
condition of the dual. Let $i\in \players$ and let $C$ be such that $f_i(C)\geq
T$, i.e., $C\in \conf{i, T}$. We distinguish between the two cases when $y_i =
0$ and $y_i = (\alpha-1)/\alpha$. On the one hand, if $y_i = 0$ we have that
$y_i \leq \sum_{j\in C}z_j$, since $\sum_{j \in C} z_j$ is always non-negative.

On the other hand, if $y_i=(\alpha-1)/\alpha$, then we have two sub-cases.
Either there is $z_j=(\alpha-1)/\alpha$ for some $j\in C$ and we have
$\sum_{j\in C} z_j\geq y_i$.  Otherwise $\sum_{j\in C} z_j=\sum_{j\in C\cap F}
v_j/T\geq\sum_{j\in C\cap F} v_{i,j}/T$, where $F$ is the set of resources which
are assigned positive value $z_j$. Suppose $\sum_{j\in C\cap F}
v_j/T<(\alpha-1)/\alpha$, then there is a set $R=C\setminus F\subseteq \res$
with $f_i(R)\geq T/\alpha$ and thus $\{i\}\cup R$ is an addable edge in
$H_{2D+1}$ and hence an addable edge within distance $2D+1$ in the original tree.
This contradicts the assumption that no such addable edges
exist, so $\sum_{j\in C} z_j\ge y_i$.

Having proved that $(y^*,z^*)$ is a feasible solution, the proof is
now completed by showing that the value of the solution is positive.
We have
\begin{equation*}
\sum_{i\in \players}y_i=\frac{\alpha-1}{\alpha} \left(1+\sum_{i=0}^D
|B_{2i}|\right),
\end{equation*}
since each player in the alternating tree has its unique blocking edge leading
to it except the root. For fat resources we have
\begin{equation*}
\sum_{j\in \res^f}z_j\leq\frac{\alpha-1}{\alpha}\sum_{i=0}^D|B_{2i}^f|,
\end{equation*}
since every fat edge contains only one fat resource by minimality.

For thin resources,
\begin{equation*}
\sum_{j\in \res^t}z_j\leq \frac{2}{\alpha} \sum_{i=1}^{D+1}
|A^t_{2i-1}| + \frac{1}{\alpha} \sum_{i=1}^{D+1} |B^t_{2i}|,
\end{equation*}
since the size of each thin edge is at most $2T/\alpha$ and the part of each
blocking edge not contained in any other $A$-edge is at most of size $T/\alpha$,
because otherwise the set of resources in the blocking edge would not be a
minimal set.

We also have $|A^t_{2i-1}|\leq |B^t_{2i}|$ for any $i$, since each adding edge has to
have at least one blocking edge.
This implies
\begin{equation*}
  \sum_{j\in \res}z_j\leq
  \frac{\alpha-1}{\alpha} \sum_{i=0}^D |B_{2i}^f|
  + \frac{3}{\alpha} \sum_{i=1}^{D+1} |B^t_{2i}|.
\end{equation*}
By the assumption toward contradiction,
\begin{equation*}
  \text{$|B^t_{2D+2}|\leq \frac{\alpha-4}{3} \sum_{i=1}^D |B^t_{2i}|$, so
  $3\sum_{i=1}^{D+1} |B^t_{2i}| \leq (\alpha-1)\sum_{i=1}^D |B^t_{2i}|$.}
\end{equation*}
This implies
\begin{equation*}
  \sum_{j\in \res}z_j\leq
  \frac{\alpha-1}{\alpha} \sum_{i=0}^D|B_{2i}^f|
  + \frac{\alpha-1}{\alpha} \sum_{i=1}^D|B^t_{2i}| <
  \frac{\alpha-1}{\alpha} \left(1+\sum_{i=0}^D |B_{2i}|\right) =
  \sum_{i\in \players} y_i,
\end{equation*}
so the dual is unbounded and we get a contradiction. \qed
\end{proof}

\section{Conclusions}
Asadpour et al. \cite{AFS08} raised as an open question whether their local
search (or a modified variant) can be shown to run in polynomial time. We made
progress toward proving this statement by showing that a modified local search
procedure finds a solution in quasi-polynomial time. Moreover, based on our
findings, we conjecture the stronger statement that there is a local search
algorithm that does not use the LP solution, i.e., it is combinatorial, and it
finds a $1/(4+\epsilon)$-approximate solution in polynomial time for any fixed
$\epsilon >0$.

\section{Acknowledgements}

We are grateful to Ji\v{r}\'{i} Sgall and Martin B\"{o}hm for pointing out a mistake in the description of the
algorithm and in the signature vector used in the runtime analysis  of an earlier version
of this paper.

\bibliographystyle{splncs03}
\bibliography{santa}

\end{document}